\title{\LARGE \bf Distributed Agreement on Activity Driven Networks}
\author{
Masaki Ogura, Junpei Tagawa, and Naoki Masuda
\thanks{M. Ogura and J. Tagawa are with the Graduate School of Information Science, Nara Institute of Science and Technology, Ikoma, Nara 630-0192, Japan.
Email:  {\tt\small oguram@is.naist.jp, tagawa.junpei.sy2@is.naist.jp}}%
\thanks{N. Masuda is with the Department of Engineering Mathematics, University of Bristol, Clifton, Bristol BS8 1UB, United Kingdom. 
Email:  {\tt\small naoki.masuda@bristol.ac.uk}}%
}
\newtheorem{definition}{Definition}[section]
\newtheorem{proposition}[definition]{Proposition}
\newtheorem{theorem}[definition]{Theorem}
\newcommand{\norm}[1]{\lVert #1 \rVert}
\newcommand{\abs}[1]{| #1 |}
\DeclareSymbolFont{bbold}{U}{bbold}{m}{n}
\DeclareSymbolFontAlphabet{\mathbbold}{bbold}
\newcommand{\onev}{\mathbbold{1}}
\DeclareMathAlphabet{\pazocal}{OMS}{zplm}{m}{n}
\renewcommand{\mathcal}[1]{\pazocal{#1}}
\newcommand{\ER}{Erd\H{o}s-R\'enyi }
\newcommand{\afterequation}{\vskip 3pt}
\begin{document}

\maketitle
\thispagestyle{empty}
\pagestyle{empty}

\begin{abstract}
In this paper, we investigate asymptotic properties of a consensus protocol
taking place in a class of temporal (i.e., time-varying) networks called the
\emph{activity driven network}. We first show that a standard methodology
provides us with an estimate of the convergence rate toward the consensus, in
terms of the eigenvalues of a matrix whose computational cost grows
exponentially fast in the number of nodes in the network. To overcome this
difficulty, we then derive alternative bounds involving the eigenvalues of a
matrix that is easy to compute. Our analysis covers the regimes of 1) sparse
networks and 2) fast-switching networks. We numerically confirm our theoretical
results by numerical simulations.
\end{abstract}

\section{Introduction}

The distributed agreement problem on networks, often referred to as a consensus problem~\cite{Olfati-Saber2007}, is an important problem in network science and engineering, with applications in multi-agent coordination~\cite{Ren2005}, distributed computing~\cite{Xiao2004}, distributed sensor networks~\cite{Cortes2005}, and power systems~\cite{Dorfler2012}. The most fundamental problem in this context is convergence analysis, where we judge if a given consensus protocol allows nodes in a network to eventually achieve a consensus. Reflecting the aforementioned wide range of applications, available in the literature are various different consensus protocols and their analysis (see, e.g., \cite{Hatano2005,Moreau2005,Tahbaz-Salehi2008,Abaid2011,Touri2014} and references therein).

In this paper, we are interested in consensus protocols where connectivity of a network changes over time (referred to as \emph{temporal networks} in network science~\cite{Holme2012,Holme2015b,Masuda2016b}), since many realistic consensus dynamics take place in temporal rather than static networks~\cite{Boyd2006c,Holme2014b}. In fact, we find a plethora of results on consensus protocols on temporal networks in the literature of systems and control theory. However, many of them still employ temporal networks whose connectivity is generated from the Erd\H{o}s-R\'enyi random graph model~\cite{Hatano2005,Zhou2009,Mousavi2017,Hale2017}, which fails to replicate major characteristics of empirical temporal networks~(see, e.g.,~\cite{Masuda2016b}).

In order to fill this gap, in this paper we focus on consensus protocols taking place in a class of temporal networks called the \emph{activity driven network}~\cite{Perra2012}. Although the activity driven network is relatively simple, the model can reproduce an arbitrary degree distribution. Properties of activity driven networks have been investigated; examples include structural properties~\cite{Perra2012,Starnini2013b}, equilibrium properties of random walks \cite{Perra2012a,Ribeiro2013}, and spreading dynamics~\cite{Perra2012,Rizzo2014,Speidel2016a}.

We investigate a continuous-time consensus protocol taking place in activity driven networks. We first show that a standard methodology~\cite{Hatano2005,Hale2017} provides us with estimates of the rate of convergence toward the consensus in terms of the eigenvalues of a matrix whose computational complexity grows exponentially fast with the number of nodes in the network. To overcome this limitation, we then show that the rate of convergence can be upper-bounded by a function of the second largest eigenvalue of a matrix that is computationally feasible to calculate. The proof is based on a rigorous estimate of the matrix exponential of Laplacian matrices, and does not rely on approximations such as truncations of matrix exponentials~\cite{Hale2017}.

Our analysis focuses on the following two regimes. We first study the consensus protocol under the assumption that snapshots (i.e., networks at a given time point) of the temporal networks are sparse. This assumption is justified by the fact that several empirical temporal networks have sparse snapshots~\cite{Holme2015b}. We then focus on the regime where switching of network connectivity is sufficiently fast. Although asymptotic properties of consensus dynamics on fast-switching networks are analyzed in~\cite{Belykh2004a,Stilwell2005,Frasca2008}, these papers do not give the rate of convergence toward the consensus. Our analysis allows us not only to analyze asymptotic stability of the consensus protocol, but also to bound its rate of convergence.

This paper is organized as follows. After introducing mathematical notations, in Section~\ref{sec:problemSetting} we introduce the activity driven network and formulate the consensus protocol taking place therein. In Sections~\ref{sec:sparse} and~\ref{sec:fast}, we present our bounds on the convergence rate of the consensus protocol in the regimes of sparse networks and fast-switching networks, respectively. We numerically confirm our theoretical results in Section~\ref{sec:numerical}.

\subsection{Mathematical Preliminaries}

We let $I_n$ denote the identity matrix of dimension $n$ and~$O_{n,m}$ denote the $n\times m$ zero matrix. We denote by~$\onev_n$ the $n$\nobreakdash-dimensional vector whose elements are all one. We let $J_n$ denote the $n\times n$ matrix whose elements are all one. For a matrix $M$, let $M^\top$ denote the transpose of $M$. If $M$ is symmetric, we denote the (real) eigenvalues of $M$ by~$\lambda_1(M) \leq \cdots \leq \lambda_n(M)$. For a vector $x$, its Euclidean norm is denoted by~$\norm{x}$. The probability of an event is denoted by~$P(\cdot)$. For an integrable random variable $X$, we let $E[X]$ denote its expectation.

An undirected network is a pair~$G = (\mathcal V, \mathcal E)$, where $\mathcal V = \{1, \dotsc, n\}$ is the set of nodes, and~$\mathcal E$ is the set of edges, consisting of distinct and unordered pairs~$\{i, j\}$ for $i, j\in \mathcal V$. We say that nodes $i$ and~$j$ are adjacent if $\{i, j\} \in \mathcal E$. The adjacency matrix~$A\in \mathbb{R}^{n\times n}$ of~$G$ is defined as the $\{0, 1\}$\nobreakdash-matrix whose $(i,j)$ entry is one if and only if nodes $i$ and $j$ are adjacent. The degree of node $i$, denoted by~$d_i$, is defined as the number of nodes adjacent to node $i$. We let $D$ denote the $n\times n$ diagonal matrix having the diagonals~$d_1$, $\dotsc$, $d_n$. The (combinatorial) Laplacian matrix of $G$ is defined by~$L = D-A$.

\section{Problem Setting} \label{sec:problemSetting}

In this section, we first introduce the activity driven network~\cite{Perra2012}. Then, we formulate a consensus process taking place in activity driven networks. We finally discuss computational difficulty in analyzing the asymptotic behavior of the consensus process.

\subsection{Activity Driven Network}

The activity driven network~\cite{Perra2012} is a discrete-time model of temporal networks defined as follows:

\begin{definition}[\cite{Perra2012}]\label{defn:ADM}
For each $i \in \mathcal V$, let $a_i$ be a positive constant less than or equal to $1$. We call $a_i$ the \emph{activity rate} of node $i$. Also, let $m$ be a positive integer less than or equal to $n-1$. The \emph{activity driven network} is defined as an independent and identically distributed sequence $ \{G_k\}_{k= 0}^\infty$ of undirected graphs created by the following procedure (see Fig.~\ref{fig:adn} for a schematic illustration):
\begin{enumerate}
\item At each time $k\geq 0$, each node $i$ becomes ``activated'' with
probability~$a_i$ independently of other nodes.

\item An activated node (if any) randomly chooses $m$ other nodes and creates
$m$ (undirected) edges between them. These edges are discarded before time
$k+1$.

\item The above procedure is repeated over a range of $k$. 
\end{enumerate}
\end{definition}

\begin{figure}[tb]
\centering \includegraphics[clip,trim={2cm 4.2cm 1.2cm
6.8cm},width=.95\linewidth]{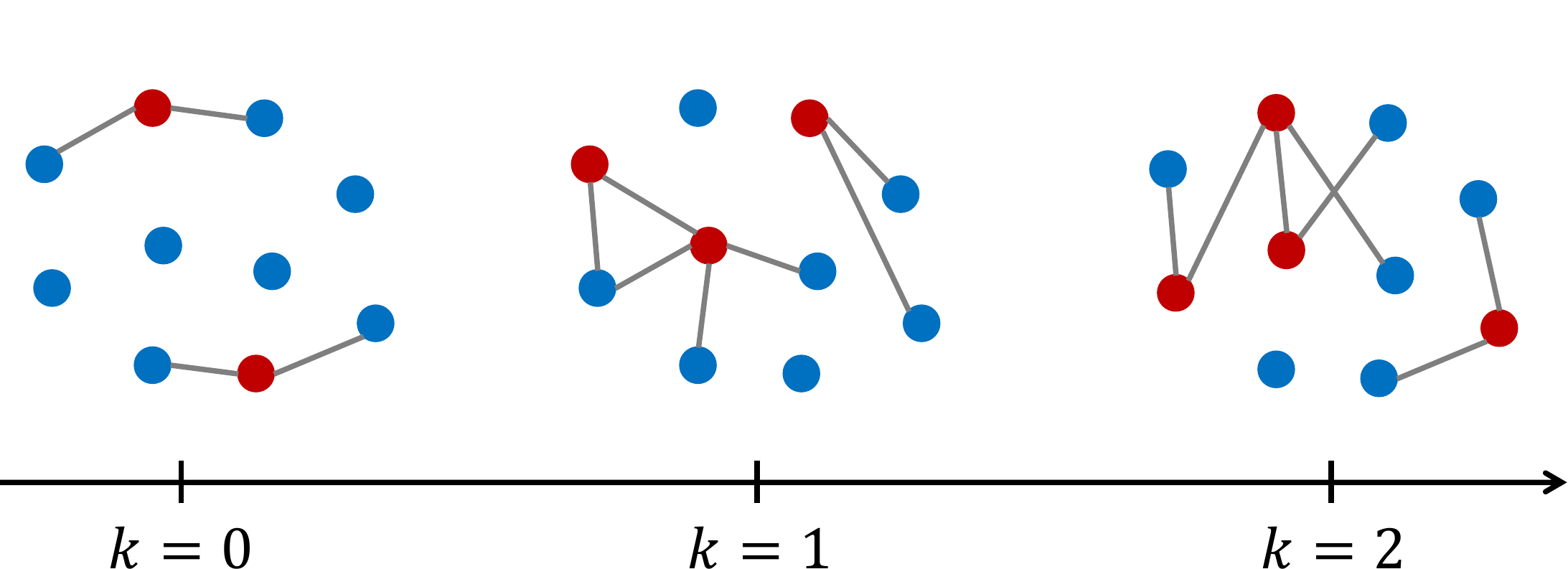} \caption{Activity driven
network. We set $n=10$ and $m=2$. Red circles represent active nodes. }
\label{fig:adn}
\end{figure}

\subsection{Consensus Protocol}\label{sec:cp}

Following the problem setup of~\cite{Hatano2005,Masuda2013a,Hale2017}, we assume that the network, on which nodes seek for consensus, is a continuous-time temporal network whose connectivity changes with a period $\Delta t>0$. Instead of assuming that the snapshots of the temporal network is the Erd\H{o}s-R\'enyi random graph model~\cite{Hatano2005,Hale2017}, we use snapshots created by the activity driven network. Specifically, let us consider the continuous-time temporal network $\{G(t)\}_{t\geq 0}$ given by
\begin{equation}\label{eq:tempNet}
G(t) = G_k,\quad k\Delta t\leq t < (k+1)\Delta t, 
\end{equation}
where $\{G_k\}_{k = 0}^\infty$ is the activity driven network.

We examine the following consensus protocol in continuous-time (see, e.g.,
\cite{Hatano2005,Olfati-Saber2007}): 
\begin{equation}\label{eq:cProtocol}
dx_i/dt = \sum_{j \in \mathcal N_i(t)} (x_j(t) - x_i(t)), \ t\geq 0, 
\end{equation}
where $x_i(t)$ is a real number and~$\mathcal N_i(t)$ denotes the set of nodes adjacent to node $i$ in network~$G(t)$. One can rewrite \eqref{eq:cProtocol} as
\begin{equation*}
{dx}/{dt} = -L(t)x(t),
\end{equation*}
where $x = [x_1\ \cdots \ x_n]^\top$, and~$L(t)$ is the Laplacian matrix of network~$G(t)$. Following Refs.~\cite{Hatano2005,Masuda2013a,Hale2017}, we focus on the dynamics of the periodic samples
\begin{equation*}
z_k = x(k \,\Delta t),\quad k=0, 1, 2, \dotsc,
\end{equation*}
of the state of the continuous-time consensus protocol~\eqref{eq:cProtocol}. The
periodic samples obey the discrete-time dynamics given by
\begin{equation}\label{eq:disc}
z_{k+1} = e^{- \Delta t L_k } z_k, 
\end{equation}
where $L_k$ denotes the Laplacian matrix of $G_k$.

Let us define the ``consensus space'' $\mathcal A$ (see, e.g., \cite{Hatano2005,Hale2017}) as the subspace of $\mathbb{R}^n$ spanned by the all-one vector~$\onev_n$. Let $\Pi$ be the orthogonal projection in $\mathbb{R}^n$ onto $\mathcal A^\perp$, the hyperplane in $\mathbb{R}^n$ orthogonal to $\mathcal A$. Then, the projection~$\Pi z_k$ allows us to measure how far from consensus the nodal states are at time $k$. In terms of this quantity, we can prove the following proposition that shows convergence of the consensus protocol and gives an expression of the convergence rate. The proof of the proposition is almost the same as \cite[Proposition~III.1]{Hatano2005} for the case of temporal networks with \ER snapshots and, therefore, is omitted.

\begin{proposition}\label{prop:conv}
Let $\epsilon > 0$. The consensus protocol~\eqref{eq:disc} satisfies
\begin{equation}\label{eq:convergence}
P \biggl( \sup_{k \geq K} \norm{\Pi z_k}^2 \geq \epsilon \biggr) 
\leq 
\epsilon^{-1}{\norm{\Pi z_0}^2} \lambda_{n-1} (E[e^{-2\Delta t L_k}])^K
\end{equation}
for all $K\geq 0$ and~$z_0\in\mathbb{R}^n$.
\end{proposition}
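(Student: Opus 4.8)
The plan is to follow the strategy of \cite{Hatano2005}: track only the component $y_k = \Pi z_k$ of the state orthogonal to the consensus space, exhibit a suitably normalized version of $\norm{y_k}^2$ as a non-negative supermartingale, and apply a maximal inequality. First I would reduce the dynamics to a closed recursion on $\mathcal A^\perp$. Since every Laplacian satisfies $L_k \onev_n = 0$ and is symmetric, $\mathcal A^\perp$ is invariant under $L_k$ and hence under $e^{-\Delta t L_k}$; equivalently, $\Pi = I_n - n^{-1}J_n$ commutes with $e^{-\Delta t L_k}$. Applying $\Pi$ to \eqref{eq:disc} and using this commutation yields $y_{k+1} = e^{-\Delta t L_k} y_k$, with $y_k \in \mathcal A^\perp$ for every $k$.

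Next I would establish a one-step contraction in conditional expectation. Let $\mathcal F_k$ denote the $\sigma$-algebra generated by $L_0, \dotsc, L_{k-1}$. As $\{L_k\}$ is i.i.d., $L_k$ is independent of $\mathcal F_k$ (hence of $y_k$), so, writing $M = E[e^{-2\Delta t L_k}]$,
\[
E[\norm{y_{k+1}}^2 \mid \mathcal F_k] = E[y_k^\top e^{-2\Delta t L_k} y_k \mid \mathcal F_k] = y_k^\top M y_k.
\]
The matrix $M$ is symmetric and positive definite, and $M\onev_n = \onev_n$ (as $L_k\onev_n = 0$ gives $e^{-2\Delta t L_k}\onev_n = \onev_n$), so $\onev_n$ is an eigenvector of eigenvalue $1$; since each $e^{-2\Delta t L_k}$ has operator norm $1$, $M$ has operator norm at most $1$, whence every eigenvalue of $M$ lies in $(0,1]$ and $\lambda_n(M) = 1$. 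Therefore the restriction of $M$ to $\mathcal A^\perp$ has largest eigenvalue $\lambda_{n-1}(M)$, and the Rayleigh quotient estimate for $y_k \in \mathcal A^\perp$ gives
\[
E[\norm{y_{k+1}}^2 \mid \mathcal F_k] \leq \lambda_{n-1}(M)\, \norm{y_k}^2.
\]

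Finally I would convert this geometric decay into the uniform tail bound. Put $\rho = \lambda_{n-1}(M) \in (0,1]$ and $W_k = \rho^{-k}\norm{y_k}^2$; the previous display shows that $W_k$ is a non-negative supermartingale. Since $\rho \leq 1$ and $k \geq K$ imply $\rho^k \leq \rho^K$, we have $\sup_{k \geq K}\norm{y_k}^2 \leq \rho^K \sup_{k \geq 0} W_k$, and hence
\[
P\Bigl(\sup_{k \geq K} \norm{y_k}^2 \geq \epsilon\Bigr) \leq P\Bigl(\sup_{k \geq 0} W_k \geq \epsilon \rho^{-K}\Bigr) \leq \frac{E[W_0]}{\epsilon\rho^{-K}} = \epsilon^{-1}\norm{\Pi z_0}^2 \rho^K,
\]
where the middle inequality is Doob's maximal inequality for non-negative supermartingales and $W_0 = \norm{\Pi z_0}^2$. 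Recalling $\rho = \lambda_{n-1}(M)$ recovers \eqref{eq:convergence}.

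The commutation and the Rayleigh-quotient estimate are routine; the step demanding the most care is passing from the geometric decay of $E[\norm{y_k}^2]$ to control of the tail event $\{\sup_{k\geq K}\norm{y_k}^2 \geq \epsilon\}$. Applying Markov's inequality separately at each $k$ would bound only individual marginals, not the supremum, so the supermartingale normalization $\rho^{-k}$ together with the maximal inequality is the essential device; it is also this normalization, combined with the restriction $k \geq K$, that produces precisely the power $\lambda_{n-1}(M)^K$ appearing in \eqref{eq:convergence}.
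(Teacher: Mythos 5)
Your proof is correct and follows essentially the same route as the paper's: the paper omits its proof, deferring to Proposition~III.1 of Hatano and Mesbahi, which is exactly your argument—reduce to the dynamics of $\Pi z_k$ on $\mathcal A^\perp$, obtain the one-step contraction $E[\norm{\Pi z_{k+1}}^2 \mid \mathcal F_k] \leq \lambda_{n-1}(E[e^{-2\Delta t L_k}])\norm{\Pi z_k}^2$ via the Rayleigh quotient, and apply the maximal inequality to the normalized nonnegative supermartingale $\rho^{-k}\norm{\Pi z_k}^2$. All the supporting details (commutation of $\Pi$ with $e^{-\Delta t L_k}$, $M\onev_n = \onev_n$ with $\lambda_n(M)=1$ so the restriction to $\mathcal A^\perp$ has top eigenvalue $\lambda_{n-1}(M)$, and positive definiteness of $M$ ensuring $\rho>0$) are handled correctly.
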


\subsection{Computational Difficulty} \label{sec:diffc}

Although Proposition~\ref{prop:conv} guarantees convergence toward the consensus with probability 1 and gives us an upper-bound on the convergence rate, the bound is not necessarily computationally tractable. This is because the matrix~$E[e^{-2\Delta t L_k}]$ is hard to calculate for large networks. To illustrate the difficulty, let us count how many different snapshots are possible in the activity driven network. For simplicity, we here ignore the cases where a pair of activated nodes choose each other as their neighbors. If $n_a$ nodes are activated at a specific time, there exist $\binom{n-1}{m}^{n_a}$ different ways in which the activated nodes choose their neighbor. Since there are $\binom{n}{n_a}$ different combinations of $n_a$ activated nodes, the total number of possible snapshots in the activity driven network equals
\begin{equation}\label{eq:Lnm}
N_{n,m} = \sum_{n_a=0}^n \binom{n}{n_a} \binom{n-1}{m}^{n_a}
= \biggl[1 + \binom{n-1}{m}\biggr]^n, 
\end{equation}
which grows exponentially in the number of nodes~$n$. For example, when $n=10$, we obtain $N_{n, m} \geq 10^{10}$ whenever $m\neq 9$. This implies that, to compute the expectation~$E[e^{-2\Delta t L_k}]$ appearing in \eqref{eq:convergence}, we need to compute no less than $10^{10}$ matrix exponentials of Laplacian matrices, which is demanding. We further remark that, for the case where snapshots of the temporal network are \ER random graphs, the authors in~\cite{Hale2017} evaluated the same matrix $E[e^{-2\Delta t L_k}]$ (for the Laplacian matrix of \ER random graphs) using a  truncation of the series expansion of matrix exponentials. However, the truncation error was not clarified in the paper.

\section{Sparse Networks}\label{sec:sparse}

In this section, we study asymptotic behavior of the consensus protocol in the regime of sparse networks where nodes have small activity rates. Under this regime, we will bound the convergence rate in terms of the eigenvalues of a matrix that can be easily calculated.

When nodes in the network have small activity rates, we expect that at most one node is activated in each snapshot~$G_k$ with a high probability. Under this regime, we can effectively approximate the original activity driven network by the temporal network~$\{G'_k\}_{k= 0}^\infty$ defined as follows:

\begin{enumerate}[label=\arabic*{$'$})]
\item At each discrete time step $k\geq 0$, at most one node becomes activated.
Specifically, for each $i\in \mathcal V$, node~$i$ is activated with probability
$a_i$ (therefore, no node is activated with probability $1-\sum_{i=1}^na_i$),
independently of other nodes. 

\item An activated node (if any) randomly chooses $m$ other nodes and creates
$m$ (undirected) edges between them. These edges are discarded before
time~$k+1$.

\item The above procedure is repeated over a range of $k$. 
\end{enumerate}

As we did in \eqref{eq:tempNet}, let us define the continuous-time temporal
network~$\{G'(t)\}_{t\geq 0}$ by
\begin{equation*}
G'(t) = G_k',\quad k\Delta t \leq t < (k+1)\Delta t,
\end{equation*}
and let $L'(t)$ denote the Laplacian matrix of the network~$G'(t)$. As in
Section~\ref{sec:cp}, we consider the consensus protocol $d x'/dt =
-L'(t)x'(t)$, and focus on the periodic samples
\begin{equation}\label{eq:disc'}
z'_{k+1} = e^{-\Delta t L'_k} z'_k,\quad z'_k = x'(k \Delta t),  
\end{equation}
where $L'_k$ is the Laplacian matrix of $G'_k$. In the same way as in
Proposition~\ref{prop:conv}, we can show that the consensus
protocol~\eqref{eq:disc'} satisfies
\begin{equation}\label{eq:convergence'}
P \biggl( \sup_{k \geq K} \norm{\Pi z_k'}^2 \geq \epsilon \biggr) 
\leq 
\epsilon^{-1}{\norm{\Pi z_0'}^2} \lambda_{n-1} (E[e^{-2\Delta t L'_k}])^K
\end{equation}
for all $\epsilon>0$, $K\geq 0$, and~$z'_0\in\mathbb{R}^n$. The following
proposition shows that,  unlike in the case of the original activity driven
network, we can easily compute the matrix~$E[e^{-2\Delta t L_k'}]$ on the
right-hand side of \eqref{eq:convergence'}:

\begin{proposition}\label{prop:convRate'}
For each $i \in \mathcal V$, define $M_i \in \mathbb{R}^{n\times n}$
by
\begin{equation}\label{eq:M:diag}
\begin{aligned}
[M_i]_{kk} = 
\begin{cases}
1 - \dfrac{m(1 - e^{-(m+1)2\Delta t})}{m+1},

\quad \mbox{if $k = i$,}
\\
1 - \dfrac{(m^2-1)e^{-2\Delta t} + e^{-(m+1)T} - m^2}{(m+1)(n-1)}, 
\quad\mbox{otherwise,}
\end{cases}
\end{aligned}
\end{equation}
for all $k$, and 
\begin{equation}\label{eq:M:nondiag}
\begin{aligned}
[M_i]_{k\ell} = 
\begin{cases}
\dfrac{(m-1)(1-e^{-(m+1)2\Delta t})}{(m+1)(n-2)(n-1)},
\quad
\mbox{if $k=i$ or $\ell = i$,}
\vspace{.5em} \\
\dfrac{m + e^{-(m+1)2\Delta t} - (m+1)e^{-2\Delta t}}{(m+1)(n-1)},
\quad
\mbox{otherwise,}
\end{cases}
\end{aligned}
\end{equation}
for all distinct pairs $(k, \ell)$. Then, it holds that
\begin{equation}\label{eq:desiredd}
E[e^{-2\Delta t L_k'}] = \biggl(1 -\sum_{i=1}^n a_i \biggr)I_n + \sum_{i=1}^n a_i M_i. 
\end{equation}
\afterequation
\end{proposition}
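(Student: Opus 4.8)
The plan is to compute the expectation by conditioning on which node (if any) activates at step $k$, and then to evaluate the resulting matrix exponential exactly through the spectrum of a star Laplacian. Throughout I write $s = 2\Delta t$, so that $e^{-s} = e^{-2\Delta t}$ and $e^{-(m+1)s} = e^{-(m+1)2\Delta t}$. In the modified model the events ``no node activates'' (probability $1 - \sum_{i=1}^n a_i$) and ``node $i$ activates'' (probability $a_i$, for each $i$) are mutually exclusive and exhaustive; crucially, because at most one node is ever active, we never have to account for two activated nodes selecting one another, which is exactly the complication that made the original model intractable. When no node activates, $L'_k = O_{n,n}$ and $e^{-sL'_k} = I_n$, so the law of total expectation yields
\begin{equation*}
E[e^{-sL'_k}] = \Bigl(1 - \sum_{i=1}^n a_i\Bigr) I_n + \sum_{i=1}^n a_i\, M_i ,
\end{equation*}
where $M_i := E[\,e^{-sL'_k} \mid \text{node } i \text{ activates}\,]$. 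It therefore remains to identify each $M_i$ with the matrix specified by \eqref{eq:M:diag}--\eqref{eq:M:nondiag}, after which \eqref{eq:desiredd} follows.

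Fixing the active node $i$, I would condition further on the uniformly random set $S \subseteq \mathcal V \setminus \{i\}$ of the $m$ neighbors it selects. For each fixed $S$ the snapshot is the star $K_{1,m}$ with hub $i$ and leaves $S$, disjoint from $n-m-1$ isolated nodes; hence, after relabeling, its Laplacian is block diagonal, $L'_k = L_{\star} \oplus O$, where $L_{\star}$ is the $(m+1)\times(m+1)$ star Laplacian. Consequently $e^{-sL'_k}$ acts as $e^{-sL_{\star}}$ on the coordinates $\{i\}\cup S$, as the identity on the remaining coordinates, and as $0$ on every entry linking the two blocks. To get $e^{-sL_{\star}}$ I would use the known spectrum of the star Laplacian: eigenvalue $0$ with eigenvector $\onev_{m+1}$; eigenvalue $m+1$ with eigenvector $v$ having hub entry $m$ and every leaf entry $-1$; and eigenvalue $1$ with multiplicity $m-1$, spanned by vectors with zero hub entry and leaf entries summing to zero. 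With $P_0 = \tfrac{1}{m+1}J_{m+1}$ and $P_{m+1} = \tfrac{1}{m(m+1)}vv^\top$ the projectors onto the first two eigenspaces,
\begin{equation*}
e^{-sL_{\star}} = e^{-s} I_{m+1} + (1 - e^{-s}) P_0 + \bigl(e^{-(m+1)s} - e^{-s}\bigr) P_{m+1},
\end{equation*}
which has only four distinct entries: a hub-diagonal value $\alpha$, a hub--leaf value $\beta$, a leaf-diagonal value $p$, and a leaf--leaf value $q$, each an explicit function of $e^{-s}$ and $e^{-(m+1)s}$.

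It then remains to average over $S$. Since $S$ is a uniform $m$-subset of the $n-1$ non-hub nodes, a fixed $j \neq i$ satisfies $j \in S$ with probability $m/(n-1)$, and a fixed pair $j,\ell \neq i$ satisfies $\{j,\ell\}\subseteq S$ with probability $m(m-1)/\bigl((n-1)(n-2)\bigr)$. Every entry of $M_i$ is then the matching convex combination: the hub diagonal is $\alpha$ deterministically; a diagonal entry at $j\neq i$ is $\tfrac{m}{n-1}p + \bigl(1-\tfrac{m}{n-1}\bigr)\cdot 1$, the $1$ coming from the isolated-node case $j \notin S$; a hub--$j$ off-diagonal is $\tfrac{m}{n-1}\beta$; and a $j$--$\ell$ off-diagonal with $j,\ell\neq i$ is $\tfrac{m(m-1)}{(n-1)(n-2)}\,q$ (all remaining configurations contribute $0$ through the identity or zero blocks). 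Substituting the explicit $\alpha,\beta,p,q$ and grouping by the cases $k=i$ versus $k\neq i$ on the diagonal, and $i\in\{k,\ell\}$ versus $i\notin\{k,\ell\}$ off the diagonal, should reproduce \eqref{eq:M:diag}--\eqref{eq:M:nondiag}.

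The genuinely delicate steps are the exact exponential of $L_{\star}$ (which I would obtain from the spectrum rather than from any truncation) and the bookkeeping of which index configurations survive the average over $S$; this is where a sign or a combinatorial prefactor is easiest to misplace. As a safeguard I would use the fact that each realization $e^{-sL'_k}$ is doubly stochastic, so $M_i$ and the whole right-hand side of \eqref{eq:desiredd} are doubly stochastic as well: requiring unit row sums, together with the sanity limits $M_i \to I_n$ as $s\to 0$ and, as $s\to\infty$, the collapse of the star block to $\tfrac{1}{m+1}J_{m+1}$, pins down every sign and every combinatorial prefactor (the factors $m/(n-1)$ and $m(m-1)/\bigl((n-1)(n-2)\bigr)$) in the final entries.
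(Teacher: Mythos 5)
Your overall architecture coincides with the paper's proof: both condition on the mutually exclusive events ``no node activates'' / ``node $i$ activates'' via the law of total expectation, reduce the conditional expectation to an average of $e^{-TL_{i,\mathcal N}}$ over the uniform $m$-subset $\mathcal N$, and compute that matrix exponential exactly by exploiting the star-plus-isolated-vertices structure. Where you genuinely diverge is in how the star exponential is obtained: the paper derives a closed form for the powers $(L_{i,\mathcal N})^k$ by induction (its appendix) and resums the exponential series blockwise, whereas you read off $e^{-sL_{\star}}$ from the spectrum ($0$, $1$ with multiplicity $m-1$, and $m+1$) with explicit projectors $P_0=\frac{1}{m+1}J_{m+1}$ and $P_{m+1}=\frac{1}{m(m+1)}vv^\top$. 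Your spectral route is shorter and less error-prone than the induction, and I verified that it yields exactly the paper's four entry values $\alpha,\beta,p,q$ in \eqref{eq:itsufficies:diag}--\eqref{eq:itsufficies:nondiag}, with the correct inclusion probabilities $m/(n-1)$ and $m(m-1)/\bigl((n-1)(n-2)\bigr)$ for the final average.

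The one gap is that you stop at ``substituting \dots should reproduce \eqref{eq:M:diag}--\eqref{eq:M:nondiag}'' without executing the substitution --- and here the final step does \emph{not} reproduce the printed formulas, because the printed formulas contain errors that your own double-stochasticity safeguard detects. Carrying out your average (with $T=2\Delta t$) gives
\begin{equation*}
[M_i]_{kk} = 1 + \frac{(m^2-1)e^{-T}+e^{-(m+1)T}-m^2}{(m+1)(n-1)} \quad (k\neq i),
\qquad
[M_i]_{ik} = \frac{m\bigl(1-e^{-(m+1)T}\bigr)}{(m+1)(n-1)},
\end{equation*}
\begin{equation*}
[M_i]_{k\ell} = \frac{(m-1)\bigl(m+e^{-(m+1)T}-(m+1)e^{-T}\bigr)}{(m+1)(n-1)(n-2)} \quad (k,\ell\neq i,\ k\neq\ell).
\end{equation*}
The second diagonal case in \eqref{eq:M:diag} has the wrong sign (its numerator is negative for $T>0$, so the printed entry would exceed $1$, impossible for an average of doubly stochastic matrices), and the two off-diagonal cases in \eqref{eq:M:nondiag} have their combinatorial prefactors effectively interchanged: with the printed values, row $i$ of $M_i$ sums to $\alpha + \frac{(m-1)(1-e^{-(m+1)T})}{(m+1)(n-2)} \neq 1$ in general. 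The paper's own proof is silent here --- it asserts $E[e^{-TL_k'}\mid\mathcal A_i]=M_i$ immediately after the star exponential, skipping the averaging step where these typos live (the statement also uses $T$ before defining it). So your plan, completed honestly, proves the corrected statement rather than the printed one; you should finish the substitution explicitly and flag the discrepancy rather than assert agreement with \eqref{eq:M:diag}--\eqref{eq:M:nondiag}.
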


\begin{proof}
Let us write $T = 2\Delta t$. Fix $k\geq 0$ and define the events
\begin{equation}\label{eq:def:events}
\begin{aligned}
\mathcal A_i &= \{\mbox{node~$i$ is activated in $G_k'$}\}, \ (i = 1, \dotsc, n),
\\
\mathcal B &= \{\mbox{no node is activated in $G_k'$}\}. 
\end{aligned}
\end{equation}
The probability of the events~$\mathcal A_i$ and~$\mathcal B$ are $a_i$ and
$1-\sum_{i=1}^n a_i$, respectively. Therefore, we have
\begin{equation}\label{eq:obsvand...}
E[e^{-T L_k'}] 
=
\biggl(1 - \sum_{i=1}^n a_i\biggr) E[e^{-T L_k'} \mid \mathcal B] +
\sum_{i=1}^n a_i E[e^{-T L_k'} \mid \mathcal A_i]. 
\end{equation}
When no node is activated (i.e., $\mathcal B$ occurs), we trivially have $L_k' =
O$ and, therefore, 
\begin{equation}\label{eq:notappearing...}
E[e^{-T L_k'} \mid \mathcal B] = I.
\end{equation}
When node $i$ is
activated, we obtain
\begin{equation}\label{eq:appearing...}
E[e^{- T L_k'} \mid \mathcal A_i] 
= 
\sum_{\mathcal N \in \mathfrak N_i} \frac{1}{\binom{n-1}{m}} e^{-T L_{i, \mathcal N} }, 
\end{equation}
where $\mathfrak N_i$ denotes the set of all the subsets of $\mathcal V\backslash \{i\}$ having exactly $m$ elements (recall that an activated node randomly chooses $m$ nodes according to the uniform distribution), and~$L_{i, \mathcal N}$ denotes the Laplacian matrix of the network having nodes $\mathcal V$ and edges $\mathcal E_{i, \mathcal N} = \{ \{i, j\} \}_{j\in\mathcal N}$. Furthermore, as we will show later in this proof, the matrix exponential appearing in the right-hand side of \eqref{eq:appearing...} is given by
\begin{equation}\label{eq:itsufficies:diag}
\begin{multlined}
[e^{-T L_{i, \mathcal N}}]_{kk} = 
\\ 
\begin{cases}
1 - \dfrac{m(1 - e^{-(m+1)T})}{m+1}, 
& 
\mbox{if $k=i$,}
\vspace{.5em} \\
e^{-T} + \dfrac{m + e^{-(m+1)T} -(m+1)e^{-T}}{m(m+1)}, 
&
\mbox{if $k \in \mathcal N$,} 
\vspace{.5em} \\
1, 
&
\mbox{otherwise,}
\end{cases}
\end{multlined}
\end{equation}
for all $k$, and 
\begin{equation}\label{eq:itsufficies:nondiag}
\begin{aligned}
&[e^{-T L_{i, \mathcal N}}]_{k\ell} = 
\\
& 
\begin{cases}
\dfrac{1-e^{-(m+1)T}}{m+1}, 
&  
\mbox{if $\{k, \ell\} \in \mathcal E_{i, \mathcal N},$} 
\vspace{.5em} \\
\dfrac{m + e^{-(m+1)T} -(m+1)e^{-T}}{m(m+1)}, 
&
\mbox{if $k\in\mathcal N$ and $\ell \in \mathcal N$,} 
\vspace{.5em} \\
0, 
&
\mbox{otherwise, }
\end{cases}
\end{aligned}
\end{equation}
for all distinct pairs $(k, \ell)$. Equations \eqref{eq:appearing...}--\eqref{eq:itsufficies:nondiag} prove $E[e^{- T L_k'} \mid \mathcal A_i] = M_i$. Hence, equations \eqref{eq:obsvand...}--\eqref{eq:appearing...} imply that desired equation~\eqref{eq:desiredd} holds true.

In the rest of the proof, we show that equations~\eqref{eq:itsufficies:diag} and~\eqref{eq:itsufficies:nondiag} hold true. Without loss of generality, we assume $i = 1$ and~$\mathcal N = \{2, 3, \dotsc, m+1\}$. It is sufficient to show that
\begin{equation}\label{eq:itsufficies}
\begin{aligned}
&e^{-T L_{i, \mathcal N}} = 
\\
&\!\!\!\!\begin{bmatrix}
\frac{me^{-(m+1)T} + 1}{m+1} 
& 
\frac{1-e^{-(m+1)T}}{m+1}\onev^\top_{m} & 0_{n-m-1}^\top
\\
\frac{1-e^{-(m+1)T}}{m+1}\onev_{m}  
&
\!\!\! e^{-T}I_m + 
\frac{m + e^{-(m+1)T} -(m+1)e^{-T}}{m(m+1)} J_m \!\!\! & O_{m,n-m-1}
\\
0_{n-m-1} & O_{n-m-1,m} & I_{n-m-1}
\end{bmatrix}\!. 
\end{aligned}
\end{equation}
As proved in the appendix, for every $k\geq 1$, it holds that
\begin{equation}\label{eq:inductino}
\!\!(L_{i, \mathcal N})^k \!\!=\!\! \begin{bmatrix}
m(m+1)^{k-1} & -(m+1)^{k-1}\onev^\top_m & 0^\top_{n-m-1}
\\
-(m+1)^{k-1}\onev_m & 
\!\!I_m + \frac{(m+1)^{k-1} - 1}{m} J_m \!\! & O_{m, n-m-1}
\\
0_m & O_{m, n-m-1} & O_{n-m-1}
\end{bmatrix}\!. 
\end{equation}
Now, let us partition the matrix~$e^{-T{L_{i, \mathcal N}}}$ as
\begin{equation*}
e^{-T{L_{i, \mathcal N}}	} = 
\begin{bmatrix}
N_{11} & N_{12} & N_{13}
\\
N_{21} & N_{22} & N_{23}
\\
N_{31} & N_{32} & N_{33}
\end{bmatrix}
\end{equation*}
according to the block structure in \eqref{eq:inductino}.
Equation~\eqref{eq:inductino} implies that matrices $N_{13}$, $N_{23}$,
$N_{31}$, and~$N_{32}$ are zero, and~$N_{33}$ is the identity matrix.
Furthermore, using the definition of matrix exponential and
\eqref{eq:inductino}, we obtain
\begin{equation*}
\begin{aligned}
N_{11}
&= 1 + \sum_{k=1}^{\infty} \frac{m(m+1)^{k-1}(-T)^k}{k!}
\\
&= 1 + \frac{m}{m+1}\left (e^{-(m+1)T} - 1\right)
\\
&=
\frac{me^{-(m+1)T} + 1}{m+1}, 
\\
N_{12} 
& = \sum_{k=1}^{\infty} \frac{-(m+1)^{k-1}\onev^\top (-T)^k}{k!}
\\
&= -\frac{1}{m+1} (e^{-(m+1)T} - 1)\onev^\top, 
\\
N_{22} &
= 
I_m + \sum_{k=1}^{\infty} \frac{I_m + \frac{(m+1)^{k-1} - 1}{m} J_m}{k!} (-T)^k
\\
&=
I_m + (e^{-T}-1)I_m + \frac{e^{-(m+1)T}-1}{m(m+1)}J_m - \frac{e^{-T}-1}{m}J_m, 
\end{aligned}
\end{equation*}
which proves \eqref{eq:itsufficies}. This completes the proof of the
proposition.
\end{proof}

We now present our first main result, which gives an efficient method for evaluating the convergence rate of the consensus protocol~\eqref{eq:disc'}. Specifically, the following theorem allows us to evaluate the convergence rate in terms of the eigenvalues of a linear combination of matrices~$M_i$ given by \eqref{eq:M:diag} and \eqref{eq:M:nondiag}, overcoming the computational difficulty illustrated in~Section~\ref{sec:diffc}.

\begin{theorem}\label{thm:slow}
Let $\epsilon > 0$ be arbitrary. In the consensus protocol~\eqref{eq:disc'}, the
probability $P( \sup_{k \geq K} \norm{\Pi z_k'} \geq \epsilon)$ converges to
zero exponentially fast as $K\to\infty$ with a decay rate less than or equal to
\begin{equation}\label{eq:bound:sp}
\gamma_{\textrm{\,SP}} = 1-\sum_{i=1}^n a_i  + \lambda_{n-1}\biggl(\sum_{i=1}^n a_i M_i\biggr).
\end{equation}
\afterequation
\end{theorem}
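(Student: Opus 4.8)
The plan is to read the decay rate directly off the convergence bound~\eqref{eq:convergence'}, using Proposition~\ref{prop:convRate'} to render the relevant eigenvalue computable, so that the theorem becomes essentially a corollary of work already done. The first step is to reconcile the two thresholds: the statement concerns $\norm{\Pi z_k'}$ whereas \eqref{eq:convergence'} bounds $\norm{\Pi z_k'}^2$. Since $\{\sup_{k\geq K}\norm{\Pi z_k'}\geq\epsilon\} = \{\sup_{k\geq K}\norm{\Pi z_k'}^2\geq\epsilon^2\}$, applying \eqref{eq:convergence'} with the threshold $\epsilon^2$ in place of $\epsilon$ gives
\[
P\bigl(\sup_{k\geq K}\norm{\Pi z_k'}\geq\epsilon\bigr) \leq \epsilon^{-2}\norm{\Pi z_0'}^2\bigl[\lambda_{n-1}(E[e^{-2\Delta t L_k'}])\bigr]^K .
\]
Hence the decay rate of the bound is controlled entirely by $\lambda_{n-1}(E[e^{-2\Delta t L_k'}])$, and it suffices to identify this quantity with $\gamma_{\textrm{\,SP}}$.

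Next I would substitute the explicit formula \eqref{eq:desiredd} from Proposition~\ref{prop:convRate'}, writing $E[e^{-2\Delta t L_k'}] = (1-\sum_{i=1}^n a_i)I_n + \sum_{i=1}^n a_i M_i$. The key structural point is that each $M_i$ is symmetric, being an average of the symmetric matrix exponentials $e^{-2\Delta t L_{i,\mathcal N}}$; therefore $\sum_{i=1}^n a_i M_i$ is symmetric and $E[e^{-2\Delta t L_k'}]$ is a shift of it by the scalar matrix $(1-\sum_{i=1}^n a_i)I_n$. Adding a multiple of the identity shifts every eigenvalue by the same constant and preserves their ordering, so $\lambda_{n-1}(E[e^{-2\Delta t L_k'}]) = (1-\sum_{i=1}^n a_i) + \lambda_{n-1}(\sum_{i=1}^n a_i M_i) = \gamma_{\textrm{\,SP}}$. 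Substituting back yields $P(\sup_{k\geq K}\norm{\Pi z_k'}\geq\epsilon) \leq \epsilon^{-2}\norm{\Pi z_0'}^2\,\gamma_{\textrm{\,SP}}^K$, which decays exponentially at rate $\gamma_{\textrm{\,SP}}$.

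There is no genuinely hard step, since the bulk of the computation was already carried out in Proposition~\ref{prop:convRate'}; the only points needing care are the following two. First, the spectral-shift identity relies on the symmetry of $\sum_{i=1}^n a_i M_i$, which I would justify from the symmetry of the Laplacians $L_{i,\mathcal N}$ (hence of their exponentials). Second, the asserted convergence to zero presupposes $\gamma_{\textrm{\,SP}}<1$; I would obtain this by observing that $E[e^{-2\Delta t L_k'}]$ is a symmetric doubly stochastic matrix — each $e^{-2\Delta t L_{i,\mathcal N}}$ fixes $\onev_n$ and has nonnegative entries — whose off-diagonal entries are strictly positive for $\Delta t>0$ and $a_i>0$. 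Such a matrix is irreducible, so its Perron eigenvalue $1$ is simple and the second largest eigenvalue $\gamma_{\textrm{\,SP}}$ lies strictly below it.
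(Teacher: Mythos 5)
Your proof is correct and follows essentially the same route as the paper, whose own proof is a one-line appeal to inequality~\eqref{eq:convergence'} together with the expression~\eqref{eq:desiredd}; your spectral-shift step $\lambda_{n-1}\bigl((1-\sum_{i=1}^n a_i)I_n + \sum_{i=1}^n a_i M_i\bigr) = 1-\sum_{i=1}^n a_i + \lambda_{n-1}\bigl(\sum_{i=1}^n a_i M_i\bigr)$ and the $\epsilon \mapsto \epsilon^2$ adjustment are exactly the details that ``directly follows'' compresses. Your closing Perron--Frobenius argument that $\gamma_{\textrm{\,SP}}<1$ (via positivity of the off-diagonal entries of the symmetric doubly stochastic matrix $E[e^{-2\Delta t L_k'}]$, using $a_i>0$) is a welcome addition that the paper leaves implicit, since without it the claimed exponential convergence to zero would not strictly be established.
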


\begin{proof}
Directly follows from the inequality~\eqref{eq:convergence'} and the expression
of the matrix $E[e^{-2\Delta t L'_k}]$ in \eqref{eq:desiredd}.
\end{proof}

\section{Fast-Switching Networks} \label{sec:fast}	

In this section, we analyze asymptotic behavior of the consensus protocol~\eqref{eq:disc} over the activity driven network under the regime of fast switching~\cite{Belykh2004a,Stilwell2005,Frasca2008} (i.e., small $\Delta t$). Also in this regime, we can evaluate the rate of convergence toward the consensus in terms of the eigenvalues of a matrix that is easy to compute.

Instead of directly studying the consensus protocol~\eqref{eq:disc}, we first introduce an alternative discrete-time temporal network, denoted by~$ \{G''_k\}_{k= 0}^\infty$, produced in the following procedure:

\begin{enumerate}[label=\arabic*{$''$})]
\item At each time $k\geq 0$, each node $i$ becomes activated with
probability~$a_i$ independently of other nodes.

\item \label{item:forSubordinance} Let $S$ be the set of activated nodes. If
$\abs{S} \geq 2$, then we randomly choose one and only one node, say $i$, from
the set~$S$ with probability $\beta_{S, i} \geq 0$ and keep it activated, while
inactivating all the other nodes.

\item An activated node (if any) randomly chooses $m$ other nodes and creates
$m$ (undirected) edges between them.   These edges are discarded before
time~$k+1$.

\item The above procedure is repeated over a range of $k$. 
\end{enumerate}

The only difference of the above procedure from that for the activity driven network is step~\ref{item:forSubordinance}, where we randomly inactivate all but one nodes. This additional step guarantees that each snapshot $G_k''$ is a ``subgraph'' of snapshot $G_k$ in the activity driven network. Due to this property, we expect that consensus will be reached in the original activity driven network faster than in~$\{G_k'' \}_{k=0}^\infty$. Consistent with this intuition, we obtain the following theorem that allows us to efficiently evaluate the convergence rate toward the consensus in the activity driven network, under the regime of fast switching:

\begin{theorem}
Let $\epsilon > 0$. Assume that $\Delta t > 0$ is sufficiently small. Then, in
the consensus protocol~\eqref{eq:disc}, the probability~$P( \sup_{k \geq K}
\norm{\Pi z_k} \geq \epsilon)$ converges to zero exponentially fast as
$K\to\infty$ with a decay rate less than or equal to
\begin{equation*}
\gamma_{\textrm{\,FS}} =  1-\sum_{i=1}^n b_i  + \lambda_{n-1}\biggl(\sum_{i=1}^n b_i M_i\biggr), 
\end{equation*}
where the matrices $M_i$ are given in \eqref{eq:M:diag} and \eqref{eq:M:nondiag} and 
the constants $b_i$ are defined by 
$b_i = p_{\{i\}} + \sum_{S\in 2^\mathcal V,\, \abs{S}\geq 2} p_S \beta_{S, i}$
with $p_S = (\prod_{i \in S} a_i ) ( \prod_{j \in
\mathcal V \backslash S} (1-a_j))$ for every subset $S$ of $\mathcal V$.
\end{theorem}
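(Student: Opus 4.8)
The plan is to reduce this theorem to the sparse-network analysis of Proposition~\ref{prop:convRate'}, by exploiting two features of the auxiliary network $\{G_k''\}$: that it activates at most one node per snapshot, and that it is pathwise a subgraph of the original network $\{G_k\}$.

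First I would observe that, by construction, each snapshot $G_k''$ has at most one active node, with node $i$ the unique active node with probability $b_i = p_{\{i\}} + \sum_{\abs{S}\geq 2} p_S\beta_{S,i}$ and no node active with probability $\prod_j(1-a_j)$. Since $\sum_{i\in S}\beta_{S,i}=1$, one checks $\sum_{i=1}^n b_i = 1-\prod_j(1-a_j)$, so $1-\sum_i b_i$ is precisely the no-activation probability. The sequence $\{G_k''\}$ is thus i.i.d.\ and has exactly the structure of the network $\{G_k'\}$ of Section~\ref{sec:sparse}, but with the activation probabilities $a_i$ replaced by $b_i$. Proposition~\ref{prop:convRate'} then applies verbatim and gives $E[e^{-2\Delta t L_k''}] = (1-\sum_i b_i)I_n + \sum_i b_i M_i$, so that $\gamma_{\textrm{\,FS}} = \lambda_{n-1}(E[e^{-2\Delta t L_k''}])$ is exactly the decay rate furnished by the analogue of inequality~\eqref{eq:convergence} for the protocol driven by $\{G_k''\}$.

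Next I would set up the natural coupling in which $G_k$ and $G_k''$ share the same initial activation and the retained node in $G_k''$ makes the same neighbour choices as in $G_k$; then $G_k''\subseteq G_k$ as simple graphs, so writing $L_k = L_k'' + R_k$ the residual $R_k$ is the Laplacian of the deleted edges and is positive semidefinite. Taking expectations yields the operator inequality $\bar L := E[L_k] \succeq E[L_k''] =: \bar L''$. It then remains to transfer this comparison of Laplacians into a comparison of the second largest eigenvalues $\lambda_{n-1}(E[e^{-2\Delta t L_k}])$ and $\lambda_{n-1}(E[e^{-2\Delta t L_k''}])=\gamma_{\textrm{\,FS}}$, and to invoke inequality~\eqref{eq:convergence} (with $\epsilon^2$ in place of $\epsilon$) to conclude exponential decay at rate at most $\gamma_{\textrm{\,FS}}$.

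The hard part will be exactly this last transfer, since the matrix exponential is \emph{not} operator monotone, so $\bar L \succeq \bar L''$ does not by itself give $E[e^{-2\Delta t L_k}] \preceq E[e^{-2\Delta t L_k''}]$; this is precisely where the hypothesis of small $\Delta t$ enters. Because $\onev_n$ is a common eigenvector with eigenvalue $1$, both expectation matrices leave $\mathcal A^\perp$ invariant and $\lambda_{n-1}$ equals the top eigenvalue of the restriction to $\mathcal A^\perp$. Writing $T = 2\Delta t$ and expanding $E[e^{-TL}]|_{\mathcal A^\perp} = I - T\,\bar L|_{\mathcal A^\perp} + O(T^2)$, first-order eigenvalue perturbation gives
\begin{equation*}
\lambda_{n-1}(E[e^{-TL_k}]) = 1 - T\,\lambda_2(\bar L) + O(T^2),
\end{equation*}
and likewise with $\bar L''$, where $\lambda_2(\cdot)$ denotes the smallest eigenvalue on $\mathcal A^\perp$. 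Since $\bar L \succeq \bar L''$ forces $\lambda_2(\bar L) \geq \lambda_2(\bar L'')$, the leading coefficient of $\lambda_{n-1}(E[e^{-TL_k}]) - \gamma_{\textrm{\,FS}}$ is nonpositive. To push this through for all sufficiently small $\Delta t$ I would verify that the coefficient is in fact \emph{strictly} negative, i.e.\ $\lambda_2(\bar L) > \lambda_2(\bar L'')$: this holds because the expected residual graph $E[R_k]$ is connected (every pair of nodes is joined by a deleted edge with positive probability whenever at least two nodes are active, an event of positive probability as all $a_i>0$), so $E[R_k]$ is positive definite on $\mathcal A^\perp$. The strict first-order term then dominates the $O(T^2)$ remainder for small $T$, yielding $\lambda_{n-1}(E[e^{-TL_k}]) \leq \gamma_{\textrm{\,FS}}$ and completing the proof. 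The only points demanding care are the eigenvalue degeneracy at $T=0$ (handled by restricting to $\mathcal A^\perp$ and using the top-eigenvalue perturbation) and a uniform control of the remainder so that a single threshold on $\Delta t$ suffices.
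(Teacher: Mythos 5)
Your proposal is correct and takes essentially the same route as the paper's proof: you identify $\{G_k''\}$ with the sparse-regime network of Section~\ref{sec:sparse} with rates $b_i$ so that Proposition~\ref{prop:convRate'} yields $E[e^{-2\Delta t L_k''}]$, couple $G_k'' \subseteq G_k$ so that the expected difference Laplacian is strictly positive definite on $\mathcal A^\perp$ (your positive-probability deletion argument is the probabilistic counterpart of the paper's counting claim about the weighted graph $\Delta G$), and then establish the key comparison \eqref{eq:Tsmall} by a first-order expansion in $T$ with an $O(T^2)$ remainder, exactly the step the paper carries out via Weyl's inequality with explicit constants $c_1$, $c_2$, $\delta$. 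The only cosmetic difference is that you work directly with expectations $\bar L = E[L_k]$ and the restriction to $\mathcal A^\perp$, whereas the paper works with the finite uniform mixture of snapshot Laplacians, but the logical content is the same.
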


\begin{proof}
Let us write $T = 2\Delta t$. We claim that, if $T$ is sufficiently small, then
\begin{equation}\label{eq:Tsmall}
\lambda_{n-1} (E[e^{- T L_k }]) 
\leq 
\lambda_{n-1} (E[e^{- T L_k''}]). 
\end{equation}
We shall temporarily assume that this claim holds true.  Fix $k\geq 0$, and define the events $\mathcal A_i$ and~$\mathcal B_i$ in the same way as in \eqref{eq:def:events}. Then, the probability of events~$\mathcal A_i$ and~$\mathcal B$ in network~$G''_k$ is $b_i$ and~$1-\sum_{i=1}^n b_i$, respectively, by the definition of the constants $b_i$. Therefore, in the same way as in the proof of Proposition~\ref{prop:convRate'}, we can show $E[e^{-T L_k''}] = (1 -\sum_{i=1}^n b_i )I + \sum_{i=1}^n b_i M_i$. From this equation and \eqref{eq:Tsmall}, we can immediately prove the theorem using Proposition~\ref{prop:conv}.

In the rest of the proof of this theorem, we prove inequality~\eqref{eq:Tsmall} under the assumption that $T$ (therefore, $\Delta t$) is sufficiently small. Similar to our analysis in Section~\ref{sec:diffc}, there exist networks~$G_{(\ell)}$ ($\ell = 1, \dotsc, N_{n,m}$) such that their Laplacian matrices $L_{(\ell)}$ satisfy
\begin{equation}\label{eq:exp}
E[e^{-T L_k }] = N_{n,m}^{-1} \sum_{\ell=1}^{N_{n,m}} e^{-T L_{(\ell)}},
\end{equation}
where $N_{n,m}$ is defined in \eqref{eq:Lnm}. Similarly, by the definition of
network $G''_k$, there exists another family of networks $G''_{(\ell)}$
($\ell=1, \dotsc, N_{n, m}$) such that $G''_{(\ell)}$ is a subgraph of
$G_{(\ell)}$ for all $\ell$, and their Laplacian matrices $L_{(\ell)}''$ satisfy
\begin{equation}\label{eq:exp''}
E[e^{- T L_k''}] = N_{n,m}^{-1} \sum_{\ell=1}^{N_{n,m}} e^{-T L_{(\ell)}''}. 
\end{equation}

Let $A_{(\ell)}$ and~$A_{(\ell)}''$ be the adjacency matrices of networks~$G_{(\ell)}$ and  $G_{(\ell)}''$, respectively. Consider the weighted network~$\Delta G$ having the adjacency matrix~$ \sum_{\ell=1}^{N_{n, m}} (A_{(\ell)} - A''_{(\ell)})$. Since $G_{(\ell)}''$ is a subgraph of $G_{(\ell)}$ for all $\ell$, edges in network~$\Delta G$ have nonnegative weights. In fact, for any pair~$(i, j)$ of distinct nodes, the edge~$\{i, j\}$ has a positive weight in $\Delta G$. This is because, if the weight of an edge in $\Delta G$ were equal to zero, then the edge would have to appear in networks $\{G_{(\ell)}\}_{\ell=1}^{N_{n, m}}$ as many times as in $\{G_{(\ell)}''\}_{\ell=1}^{N_{n, m}}$, which contradicts the definition of network~$G_k''$ (specifically, step~2'')). Hence, the Laplacian matrix~$\Delta L$ of network~$\Delta G$ is positive semidefinite with its null space being spanned by~$\onev_n$, i.e., we have
\begin{equation}\label{eq:>0}
x^\top \Delta L x > 0
\end{equation}
for all $x \in \mathcal A^\perp \backslash \{0\}$. 

Now, let us define the matrices $L = \sum_{\ell=1}^{N_{n, m}} L_{(\ell)}$ and $L'' = \sum_{\ell=1}^{N_{n, m}} L_{(\ell)}''.$ Since $\Delta L = L-L''$, inequality~\eqref{eq:>0} implies $x^\top L x > x^\top L'' x$ for all $x \in \mathcal A^\perp \backslash \{0\}$. Therefore, the Rayleigh-Ritz theorem~\cite{Horn1990} shows $\lambda_{n-1}(L) > \lambda_{n-1}(L'')$. Let us take an arbitrary $c_1>0$ such that $\lambda_{n-1}(L) \geq c_1N_{n,m}+ \lambda_{n-1}(L'')$. Then, for any $T>0$, we have
\begin{equation}\label{eq:wehave}
\lambda_{n-1}\left(I-\frac{TL}{N_{n,m}}\right) \leq 
\lambda_{n-1}\left(I-\frac{TL''}{N_{n,m}}\right) - c_1T. 
\end{equation}
Define the matrices 
\begin{equation}\label{eq:VV''}
\begin{aligned}
V 
&= \left(I-\frac{TL}{N_{n,m}}\right) - E[e^{-TL_k}], 
\\
V'' 
&= \left(I-\frac{TL''}{N_{n,m}}\right) - E[e^{-TL_k''}].
\end{aligned}
\end{equation}
Then, Weyl's inequality and \eqref{eq:wehave} imply
\begin{equation}\label{eq:wehave''}
\lambda_{n-1}(E[e^{-TL_k}]) + \lambda_1(V)  \leq 
\lambda_{n-1}(E[e^{-TL_k''}]) + \lambda_n(V'')  - c_1T.
\end{equation}
Since equations \eqref{eq:exp}, \eqref{eq:exp''}, and \eqref{eq:VV''} imply that both $V$ and~$V''$ are of order~$O(T^2)$, there exist positive constants $c_2$ and $\delta$ such that, if $T<\delta$, then $\abs{\lambda_1(V)} < c_2T^2$ and $\abs{\lambda_n(V'')} < c_2 T^2$. These inequalities and \eqref{eq:wehave''} imply that, if $T < \delta$, then $\lambda_{n-1}(E[e^{-TL_k}]) \leq \lambda_{n-1}(E[e^{-TL_k''}]) - c_1T + 2c_2T^2$. Hence, inequality~\eqref{eq:Tsmall} holds true for $T\leq \min(\delta, c_1 c_2^{-1}/2)$, as desired. This completes the proof of the theorem.
\end{proof}

\section{Numerical Simulations} \label{sec:numerical}

In this section, we numerically illustrate our results obtained in the previous
sections. Due to limitations of the space, we focus on the regime of sparse
networks (studied in Section~\ref{sec:sparse}). We first consider an activity driven network with $n=10$ nodes. We independently draw the activity rates of
nodes from the uniform distribution on $[0, 1/100]$. We set $m = 4$, $\Delta t =
2$, and $\epsilon = 0.3$. We empirically compute the probabilities~$P(\sup_{k
\geq K} \norm{\Pi z_k'}^2 \geq \epsilon)$ for $0\leq K<100$ using $10^4$ sample
paths of the discrete-time consensus dynamics~\eqref{eq:disc'} with a randomly
chosen initial condition $z'_0$. The empirical probabilities are shown in
Fig.~\ref{fig:}.\subref{fig:1}. We see that our theoretical bound on the decay
rate~$\gamma_{\textrm{\,SP}}$ (represented by the blue dashed line) is in a good
fit with the actual decay rate.

We then run another simulation for a larger activity driven network having $n=50$ nodes and activity rates drawn from the uniform distribution on $[0, 1/500]$. We set $m = 15$, $\Delta t = 3$, and $\epsilon = 0.1$. We empirically compute the probabilities~$P(\sup_{k \geq K} \norm{\Pi z_k'}^2 \geq \epsilon)$ (shown in Fig.~\ref{fig:}.\subref{fig:fig2}), and estimate the decay rate to be $0.9789$. The relative error of our theoretical upper bound~$\gamma_{\textrm{\,SP}} = 0.9855$ from the actual decay rate is less than $0.7\%$, which confirms a high accuracy of our bound.

\newcommand{\mywidth}{.7975\linewidth}

\begin{figure}[tb]
\vspace{1mm}
\centering
\begin{subfigure}[b]{\linewidth}
\centering \includegraphics[width=\mywidth]{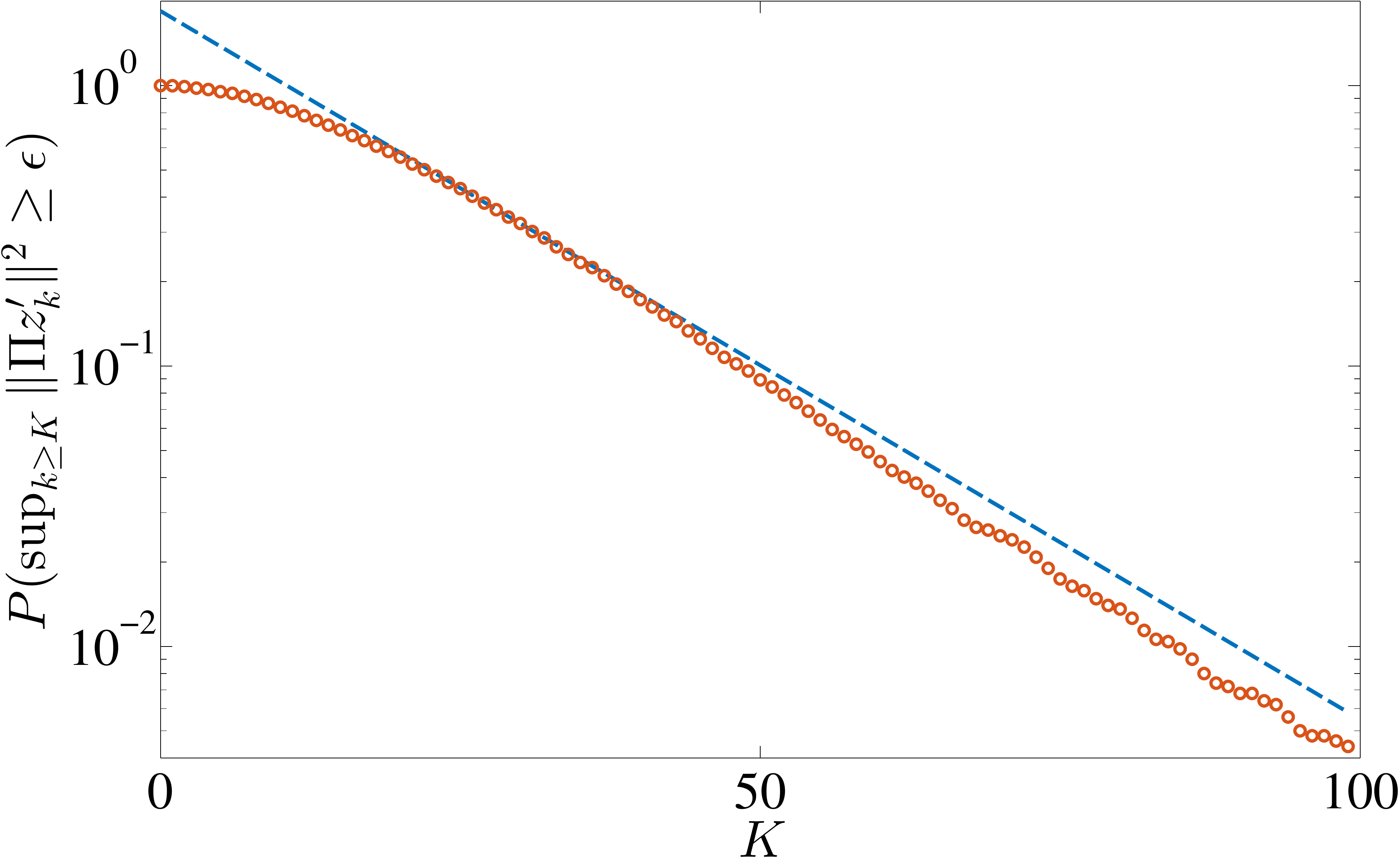}
\caption{Numerical results for an activity driven network with $n=10$ nodes. Activity rates are drawn from the uniform distribution on $[0, 1/100]$. We set $m=4$, $\Delta t = 2$, and $\epsilon = 0.3$.}
\label{fig:1}
\end{subfigure}
\vspace{2mm}
\\
\begin{subfigure}[b]{\linewidth}
\centering \includegraphics[width=\mywidth]{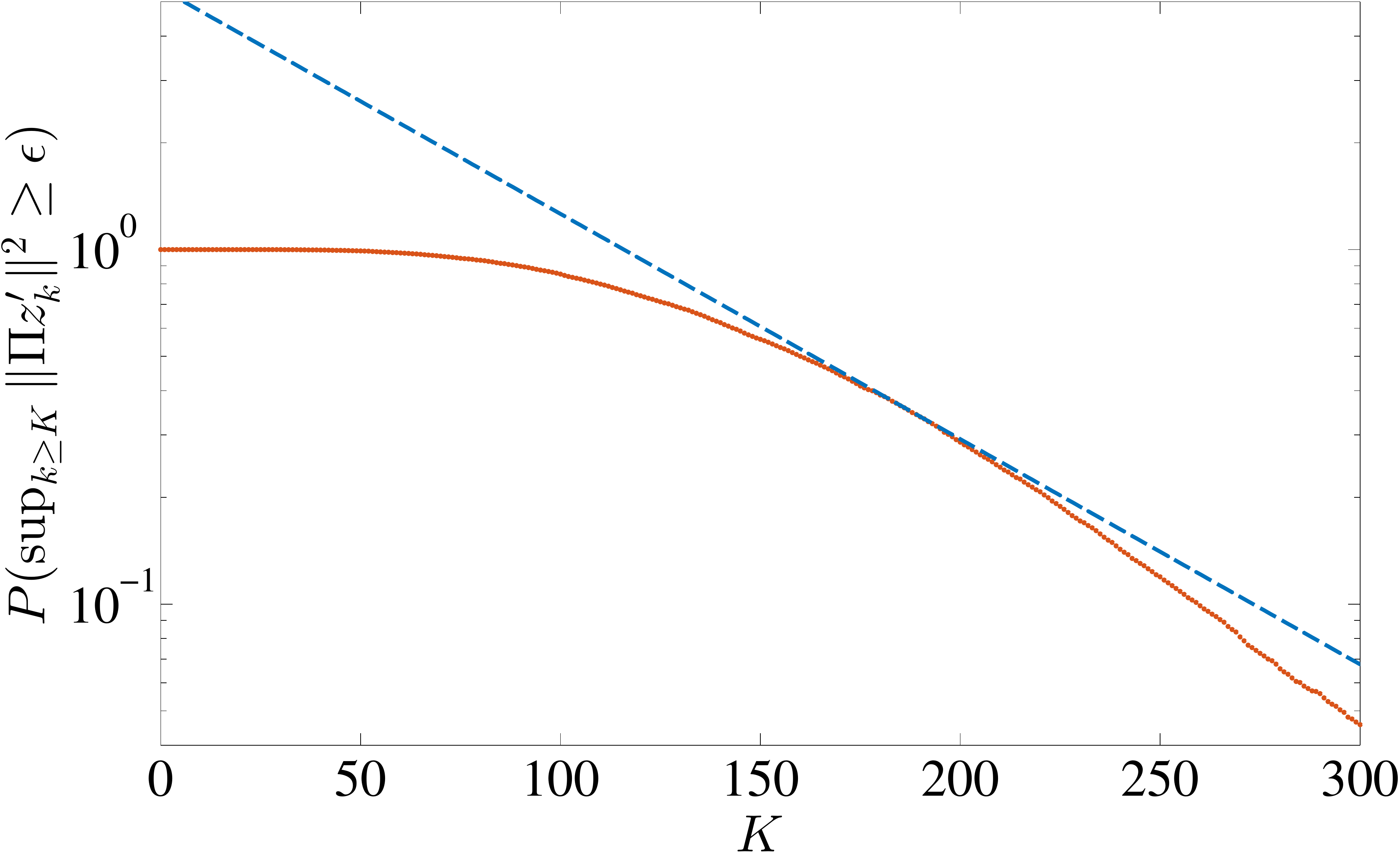}
\caption{Numerical results for an activity driven network $n=50$ nodes. Activity rates are drawn from the uniform distribution on $[0, 1/500]$. We set $m=15$, $\Delta t = 5$, and $\epsilon = 0.1$.}
\label{fig:fig2}
\end{subfigure}
\caption{The probability of having $\sup_{k \geq K}\norm{\Pi z_k'}^2 \geq \epsilon$ (circles).
The dashed line
shows geometric sequences having scale factor $\gamma_{\textrm{\,SP}}$ (given in \eqref{eq:bound:sp}).}
\label{fig:}
\end{figure}

\section{Conclusion}

We have studied consensus dynamics taking place in activity driven networks. We have presented computationally efficient frameworks for evaluating the convergence rate toward consensus, in terms of the eigenvalues of matrices that are easy to compute. Our analyses cover the cases of sparse networks and fast-switching networks, and avoid computations of the matrix exponentials of exponentially many Laplacian matrices required by an existing method. We have numerically confirmed our theoretical results via simulations.

\appendix

\renewcommand{\thesection}{A}

In this appendix, we prove equation~\eqref{eq:inductino} using an induction on
$k$. Since $L$ is the Laplacian matrix of the network having nodes $\mathcal V$
and edges $\{ \{1, j\} \}_{2\leq j\leq m+1}$, we have
\begin{equation}\label{eq:L}
L = \begin{bmatrix}
m & -\onev^\top_m & 0^\top
\\
-\onev_m &I_m & O
\\
0 & O & O
\end{bmatrix}, 
\end{equation}
showing that \eqref{eq:inductino} holds true for $k = 1$. Now, we
assume that \eqref{eq:inductino} holds true for $k = k_0$ and compute $L^{k_0+1}$.
Let us partition matrix $L^{k_0+1}$ as
\begin{equation*}
L^{k_0+1} = 
\begin{bmatrix}
M_{11} & M_{12} & M_{13}
\\
M_{21} & M_{22} & M_{23}
\\
M_{31} & M_{32} & M_{33}
\end{bmatrix} 
\end{equation*}
according to the block structure of $L$ in \eqref{eq:L}. From our assumption, it
follows that
\begin{equation*}
\begin{aligned}
M_{11} &= 
m(m+1)^{k_0-1} \cdot m +(-(m+1)^{k_0-1}\onev_m^\top)(-\onev_m)
\\
&=
m^2(m+1)^{k_0-1} + m(m+1)^{k_0-1}
\\
&=
m(m+1)^{k_0}, 
\\
M_{12} 
&= 
m(m+1)^{k_0-1}  (-\onev_m^\top) +(-(m+1)^{k_0-1}\onev_m^\top)I_m
\\
&= - (m+1)^{k_0} \onev_m, 
\\
M_{22} 
&= 
-(m+1)^{k-1}\onev_m (-\onev_m^\top) + \biggl( I_m + \frac{(m+1)^{k-1} - 1}{m} J_m\biggr)I_m
\\
&=
(m+1)^{k_0-1}J_m + I_m + \frac{(m+1)^{k_0-1} - 1}{m}J_m
\\
&=
I_m + \frac{(m+1)^{k_0} - 1}{m} J_m. 
\end{aligned}
\end{equation*}
We can also easily confirm that matrices $M_{13}$, $M_{23}$, and~$M_{33}$ are
zero. Therefore, by the symmetry of matrix~$L^{k_0+1}$, we conclude that
equation \eqref{eq:inductino} holds true for $k = k_0+1$ as well. Hence, an
induction on $k$ completes the proof of equation~\eqref{eq:inductino} for every
$k\geq 1$.


\end{document}